\newtheorem{thm}{Theorem}
\newtheorem{pro}[thm]{Proposition}
\newcommand{\Tr}{Tr}
\newcommand{\vect}[1]{\boldsymbol{\mathbf{#1}}}
\DeclareSymbolFont{bbold}{U}{bbold}{m}{n}
\DeclareSymbolFontAlphabet{\mathbbold}{bbold}
\newcommand{\ubar}[1]{\underaccent{\bar}{#1}}
\newcommand{\changefont}{\fontsize{9}{9}\selectfont}
\begin{document}

%
\title{Enabling Grid-Aware Market Participation of Aggregate Flexible Resources}

\author{\IEEEauthorblockA{Bai Cui, Ahmed Zamzam, Andrey Bernstein}
\IEEEauthorblockA{National Renewable Energy Laboratory\\
Golden, CO, USA\\ \texttt{\{bai.cui, ahmed.zamzam, andrey.bernstein\}@nrel.gov}}
\thanks{This work was authored by the National Renewable Energy Laboratory, operated by Alliance for Sustainable Energy, LLC, for the U.S. Department of Energy (DOE). The views expressed in the article do not necessarily represent the views of the DOE or the U.S. Government. The U.S. Government retains and the publisher, by accepting the article for publication, acknowledges that the U.S. Government retains a nonexclusive, paid-up, irrevocable, worldwide license to publish or reproduce the published form of this work, or allow others to do so, for U.S. Government purposes.}}


%





\maketitle
\thispagestyle{fancy}
\pagestyle{fancy}


\begin{abstract}
Increasing integration of distributed energy resources (DERs) within distribution feeders provides unprecedented flexibility at the distribution-transmission interconnection. With the new FERC 2222 order, DER aggregations are allowed to participate in energy market. To enable market participation, these virtual power plants need to provide their generation cost curves. 
This paper proposes efficient optimization formulations and solution approaches for the characterization of hourly as well as multi-time-step generation cost curves for a distribution system with high penetration of DERs. Network and DER constraints are taken into account when deriving these cost curves, and they enable active distribution systems to bid into the electricity market. The problems of deriving linear and quadratic cost curves are formulated as robust optimization problems and tractable reformulation/solution algorithm are developed to facilitate efficient calculations. The proposed formulations and solution algorithm are validated on a realistic test feeder with high penetration of flexible resources.
\end{abstract}

\begin{IEEEkeywords}
Pricing flexibility, virtual power plants, energy markets, market participation optimization, aggregated distributed energy resources.
\end{IEEEkeywords}


%
\IEEEpeerreviewmaketitle

\section{Introduction}

The increased share of solar, wind, energy storage, and dispatchable load in the generation mix is overhauling the traditional paradigm of energy systems operations. While scenarios such as renewables meeting total demand seemed far-fetched until recently, system operators in Southern Australia and the State of California have experienced situations where renewable met at least 80\% of total demand in 2020 and 2019, respectively \cite{aemo_2021,californiaISO}. In addition, with targets of 50\% emissions reductions by 2030, and reaching net zero emissions economy-wide by no later than 2050\cite{white_house_2021}, increasing integration of renewables and flexible resources is expect to rise significantly across the globe in the near future.

Increasing installation of distributed energe sources within distribution systems is transforming the traditionally passive networks into virtual power plants (VPP). That is, the aggregate power consumption/generation of distribution networks can be shaped with flexibility provided from renewables, energy storage systems, and flexible loads, to behave as a power plant injecting power into the transmission system. With FERC Order No. 2222 \cite{ferc_2020}, VPPs or aggregators are allowed to participate in energy markets. In order to participate in day-ahead energy market, generators need to provide their (piecewise linear) daily generation cost curves in one-hour resolution, which describe the relationship between power output and the corresponding cost at each hour of the day. For conventional generators, deriving the cost curve is relatively simple, as the cost is a direct reflection of the generator's own characteristics and fuel costs. On the other hand, VPPs participation necessitates the design of corresponding cost curves that maximizes the operational objectives of aggregators while being cognizant of grid operational constraints.

Generally speaking, designing the VPP cost curve is challenging due to 1) the power output of a VPP can be collectively realized by the downstream distributed energy resources (DERs) in different ways, making the mapping between the power output and actual generation cost ill-defined, and 2) the capability of energy storage units to save their energy generation for later use complicates the mapping between the generation cost and the power output of the VPP. That is, the cost curve is time-coupled unlike conventional generators where the power generated by some DERs at one time may be stored and only supplied to the grid at a later time. The time coupling in this case is much more complicated than simple constraints such as ramp constraint which can be provided to market operators to include in their market clearing optimizations. Consequently, the goal of the paper is to overcome these issues and develop a computationally efficient approach to generate \emph{feasible} daily generation cost curves of a VPP. 

Different approaches have been utilized in the literature to exploit the flexibility within distribution systems. A commonly used method is demand response which has been proposed extensively in the literature where nodes with flexible resources can monetize this flexibility \cite{deng2015survey, jordehi2019optimisation}. A widespread demand response paradigm focused on providing variable rates for energy to solicit a specific load response from customers, which includes time-of-use pricing \cite{datchanamoorthy2011optimal} and variable real-time pricing \cite{allcott2009real} schemes. Another paradigm is incentive-based demand response where the customers are incentivized to respond to load reduction instructions from system operators and sometimes penalized in case of not responding \cite{chai2019incentive, aalami2010demand}. Some of these demand response strategies have been deployed in practice due their ability to lower demand especially during peak hours. However, the heterogeneity of flexibility forms, especially when resources are aggregated, opens new more organized venues for monetizing the flexibility. One of these venues is for aggregators to participate in energy markets which necessitates characterizing the incurred cost of providing any specific aggregate power generation profile.

A novel approach to price flexibility in demand side was proposed in \cite{werner2021pricing} where shiftable demands were directly integrated into traditional energy market optimization. The shiftable demands are loads that need to consume a predefined energy amount regardless of how this amount is divided between time intervals. While providing overall system efficiency benefits, such demand is not common in practice which limits the applicability of the approach. The approach does not consider scenarios of aggregated demand where there might be multiple disaggregation strategies to an aggregate load profile which have different operational costs. In such scenarios, flexibility should be priced to ensure profitability of the aggregator operations. Another aspect of the pricing is to ensure that realizing a specific load profile does not lead to network operational constraints violations. This issue can be resolved by resorting to flexibility characterizations that encapsulate some network operational constraints such the high-dimensional box characterization in \cite{chen2019aggregate}, the ellipsoidal characterization in \cite{cui2021network}, or other characterizations \cite{lopez2021quickflex,chen2021leveraging}.

The main goal of this paper is to identify flexibility cost curves that define the price of any aggregate power generation profile from the VPP. Note that any aggregate power profile is often realizable using infinitely many disaggregated power consumption/generation of the aggregated flexible resources. Accordingly, the cost of any aggregated power profile can be different especially under asymmetrical energy prices for individual resources, or asymmetrical prices for buying and selling energy. For this reason, in this paper, the aggregate flexibility price is designed to guarantee profitability of any specific aggregate power profile under any possible disaggregation. The paper formulates this task as a multi-stage robust optimization problem that determines parameters of the lowest cost curve guaranteeing profitability. In particular, the paper utilizes a linear and a quadratic cost curves that characterize the price of an aggregate power profile --- not separate cost curves for each time interval. Although, as will be discussed in the paper, compatible single-time-step (e.g., hourly) cost curves can be derived. Using duality techniques, the optimization problem is translated into a single-stage optimization problem that can be solved efficiently to identify the cost curve parameters. To the best of the authors' knowledge, this represents the first effort to characterize the operational cost of realizing an aggregate power profile from aggregated heterogeneous flexible resources.

The remainder of this paper is organized as follows. In Section \ref{sect:model}, the distribution feeder and aggregation models are presented. Then, the proposed optimization formulation is developed in Section \ref{sect:formulation}. The proposed approach is verified through multiple simulations in Section \ref{sect:simulation}, and the paper is concluded in Section \ref{sect:conclusion}.

\section{System Modeling} \label{sect:model}

Consider a multiphase radial distribution network where the set $\mathcal{N}$ collects the nodes, i.e., $\mathcal{N} = \{0, 1, \ldots, N\}$. We denote the substation bus as bus-$0$. Then, we define the set $\mathcal{N}^+ := \mathcal{N}\backslash\{0\}$. To simplify the notations, it is assumed that all the buses $l\in \mathcal{N}$ feature three-phase connections, namely, $a$, $b$, and $c$. The set ${\Phi}=\{a, b, c\}$ and ${\Phi}_{\Delta} = \{ab, bc, ca\}$ are defined to include the different phases. In addition, the voltage magnitudes at bus $k$ is denoted by $\vect{v}_k \in \mathbb{R}^3$. The vector $\vect{v}\in\mathbb{R}^{3N}$ collects the voltage magnitudes all phases in the network except for the slack bus.
Also, the time horizon is discretized into $T$ time periods, where each period has an equal length of $\tau$.
Three types of DERs are considered in this paper which have diverse operational characteristics, namely, photovoltaic (PV) inverters, battery systems, and HVAC systems. The proposed aggregation framework can also be generalized to account for DER models.

\subsection{DER Models}

\subsubsection{PV Inverters}

The set of phases equipped with PV inverters are collected in the a set denoted by $\mathcal{R}$. The active and reactive power injections at time $t$ for a PV unit installed at phase $\phi$ at bus $k$ are constrained as follows:
\begin{align} \label{eq:ren-const}
& 0 \leq p_{k, \phi}^{(R)}(t) \leq \bar{P}_{k, \phi}^{(R)}(t), && \forall t \\
& q_{k, \phi}^{(R)}(t) = \psi_k^{(R)} p_{k, \phi}^{(R)}(t) && \forall t
\end{align}
where $\bar{P}_{k, \phi}^{(R)}(t)$ denotes the available active power at this PV unit at time $t$, and the fixed power factor of the PV unit is denoted by $\psi_k^{(R)}$. Note that PV united with variable power factor can be considered by considering $\psi_k^{(R)}$ as a variable that can constrained which does not introduce undesirable nonlinearity in the model.

\subsubsection{Energy Storage Units}

The set of phases connected to energy storage systems are collected in the set $\mathcal{B}$. The power output of energy storage units installed at bus $k$ and phase $\phi$ is constrained as follows:
\begin{align}\label{eq:bat-rate-const}
    & \ubar{P}_{k, \phi}^{(B)} \leq p_{k, \phi}^{(B)}(t) \leq \bar{P}_{k, \phi}^{(B)}, &&\forall t\\
    & q_{k, \phi}^{(B)}(t) = \psi_k^{(B)} p_{k, \phi}^{(B)}(t), && \forall t
\end{align}
where $\ubar{P}_{k, \phi}^{(B)}$ and $\bar{P}_{k, \phi}^{(B)}$ denote the maximum charging and discharging rates, and $\psi_k^{(B)}$ denotes the fixed power factor of the energy storage unit. Additionally, it is assumed that the state of charge of the energy storage unit  satisfies the limits:
\begin{align}\label{eq:bat-SoC}
    & e_{k, \phi} (t) = \kappa_k e_{k, \phi}(t-1) - \tau p_{k, \phi}^{(B)}(t), && \forall t \\
    & \ubar{e}_{k, \phi} \leq e_{k, \phi}(t) \leq E_{k, \phi}, && \forall t
\end{align}
where $e_{k, \phi} (t)$ denotes the battery state of charge; $0 < \kappa_k \le 1$ is the storage efficiency factor. Also, $\ubar{e}_{k, \phi}$ and $E_{k, \phi}$ represent the minimum allowed state of charge and the capacity of the energy storage unit, respectively.

\subsubsection{HVAC Systems}
The phases equipped with HVAC systems are collected in the set $\mathcal{H}$. Similar to energy storage units, the power consumption of the system install at bus $k$ and phase $\phi$ at time $t$ abides by the following constraints:
\begin{align} \label{eq:hvac-const}
& 0 \leq p_{k, \phi}^{(h)}(t) \leq \bar{P}_{k, \phi}^{(h)}(t), \quad q_{k, \phi}^{(h)}(t) = \psi_k^{(h)} p_{k, \phi}^{(h)}(t) && \hspace{-0.2in} \forall t \\
& \ubar{H}_k \le H_k^{\mathrm{in}}(t) \le \bar{H}_k, && \hspace{-0.2in} \forall t \\
& H_k^{\mathrm{in}}(t) = H_k^{\mathrm{in}}(t-1) + \alpha_k (H_k^{\mathrm{out}}(t) - H_k^{\mathrm{in}}(t-1)) \nonumber\\
& \hspace{1.5in} + \tau \beta_k p_k(t)^{(h)}(t), && \hspace{-0.2in} \forall t \label{eq:hvac-const:c}
\end{align}
where $\bar{P}_{k, \phi}^{(h)}(t)$ denotes the active power capacity of the HVAC system at time $t$, $\psi_k^{(h)}$ denotes the fixed power factor, and $H^{\mathrm{in}}(t)$ and $H^{\mathrm{out}}(t)$ are the indoor and outdoor temperatures at time $t$. Equation \eqref{eq:hvac-const:c} represents the indoor temperature dynamics, where $\alpha_k$ and $\beta_k$ are the parameters specifying the thermal characteristics of the building and the environment. A detailed model of such systems can be found in \cite{li2011optimal}.

\subsection{Network Model} \label{sect:network}

For a phase $\phi$ at bus $k$ which is assumed to be a member of $\mathcal{R}$, $\mathcal{B}$, and $\mathcal{H}$, the total power injection at time $t$ is given by:
\begin{align}\label{eq:nodal-inj}
    p_{k, \phi}(t) &= p_{k, \phi}^{(B)}(t) - p_{k, \phi}^{(R)}(t) - p_{k, \phi}^{(h)}(t) - p_{k, \phi}^{(L)}(t) \\
    q_{k, \phi}(t) &= q_{k, \phi}^{(B)}(t) - q_{k, \phi}^{(R)}(t) - q_{k, \phi}^{(h)}(t) - q_{k, \phi}^{(L)}(t)
\end{align}
We can assume that respective device power injection $p_{k,\phi}^{(\cdot)}$, $q_{k,\phi}^{(\cdot)}$ is zero when the corresponding DER is not present at the phase. Then, the active (reactive) power injections from the delta and wye phases at all buses $k \in \mathcal{N}^+$ at all time slots $t \in \{1, \ldots, T\}$ are collected in the vectors ${\vect{p}}_{\Delta}$ (${\vect{q}}_{\Delta}$) and ${\vect{p}}_{Y}$ (${\vect{q}}_{Y}$), respectively. 

The linear power flow model developed in~\cite{bernstein2017linear} is utilized to approximate the network-wide voltage magnitude profile as follows:
\begin{align}\label{eq:power-flow-voltMag}
    {\bf v} = {\bf M}_{\Delta}^{(p)} {\vect{p}}_{\Delta} + {\bf M}_{\Delta}^{(q)} {\vect{q}}_{\Delta} + {\bf M}_{Y}^{(p)} {\vect{p}}_{Y} + {\bf M}_{Y}^{(q)} {\vect{q}}_{Y} + \tilde{\bf v}
\end{align}
where ${\bf v}\in \mathbb{R}^{3NT}$ collects the voltage magnitudes at all phases $\phi \in \Phi$ at all buses $k \in \mathcal{N}^+$ for all time steps $t \in \{1, \ldots, T\}$. Then, constraints on the voltage magnitudes ${\bf v}$ are enforced as follows:
\begin{align} \label{eq:voltMag-const}
    \ubar{\bf v} \leq {\bf v} \leq \bar{\bf v}.
\end{align}
Furthermore, the net power injections at the three phases of the substation bus (Node $0$) are given by:
\begin{align}
       {\vect{p}}_0 = {\bf G}_{\Delta}^{(p)} {\vect{p}}_{\Delta} + {\bf G}_{\Delta}^{(q)} {\vect{q}}_{\Delta} + {\bf G}_{Y}^{(p)} {\vect{p}}_{Y} + {\bf G}_{Y}^{(q)} {\vect{q}}_{Y} + {\bf c} 
\end{align}
where the vector ${\vect{p}}_0$ collects the net injections at all phases at the substation for all time instants. 

The operational profiles of the considered DERs, which include $p_{k, \phi}^{(B)} (t)$, $q_{k, \phi}^{(B)} (t)$, $p_{k, \phi}^{(R)} (t)$, $q_{k, \phi}^{(R)} (t)$, $p_{k, \phi}^{(h)} (t)$, and $q_{k, \phi}^{(h)} (t)$, are collected in a vector $\vect{p} \in \mathbb{R}^{nT}$. After some simple manipulations, the constraints can be written compactly as:
\begin{align} \label{eq:LPfeasibility}
    \vect{W} {\vect{p}} \le \vect{u},
\end{align}
where $\vect{W}$ is a constant constraint matrix capturing the device operational constraints and the network voltage magnitude constraints for all time steps, and $\vect{u}$ incorporates the passive loads at each time period which are assumed to be known.

Moreover, the DERs operations and the power injection at the substation are coupled by the following equation:
\begin{align} \label{eq:aggragation}
    {\vect{p}}_0 = \vect{D}{\vect{p}} + \vect{b}
\end{align}
where $\vect{D} \in \mathbb{R}^{T \times nT}$ and $\vect{b} \in \mathbb{R}^T$ model the relationship between the substation power profile and the individual DER power profiles, i.e., \emph{the aggregation model}. Also, it is worth noting that the matrix $\vect{D}$ and vector $\vect{b}$ are both constant.

\section{Problem Statement and Optimization Formulation} \label{sect:formulation}




We assume that, irrespective of the generation type, location (bus number), and time of the day, the aggregator offers a flat buying and selling prices of $c_b$ and $c_s$, respectively, for unit power, where $c_s > c_b > 0$. For example, when the vector of bus power injection is $\vect{p}_t = \vect{p}^g_t - \vect{p}^\ell_t$ where $\vect{p}^g_t, \vect{p}^\ell_t \ge \mathbbold{0}$ are orthogonal vectors of bus power generation and demand at time $t$ (the orthogonality condition dictates that at most one of the two vectors is nonzero for each element), the payment to the customer by the system aggregator is $c_b(\vect{p}_t^g)^\top \mathbbold{1} - c_s(\vect{p}^{\ell}_t)^\top \mathbbold{1}$. Given the relationship between the aggregate power output $p^0_t$ and the individual bus power injection $\vect{p}_t$: $p_t^0 = \vect{D}_t\vect{p}_t + b_t$ and tfhe set of feasible $\vect{p}^0$ in $\mathcal{F} = \{ \vect{p}^0: \vect{p}^0 = \vect{D} \vect{p} + \vect{b}, \vect{W}\vect{p} \le \vect{u} \}$, we want to design a multi-time-step cost curve for $\vect{p}^0$ such that for any $\vect{p}^0 \in \mathcal{F}$, the total cost prescribed by the multi-time-step generation cost curves covers the maximum payment to customers by the system aggregator $c_b(\vect{p}_t^g)^\top \mathbbold{1} - c_s(\vect{p}^{\ell}_t)^\top \mathbbold{1}$ for \emph{any} feasible disaggregation strategies $(\vect{p}^g, \vect{p}^\ell)$.

\subsection{Linear Generation Cost Curve}

In this section we assume a linear generation cost curve with the form $\vect{y}^\top \vect{p}^0 + z$. With this model, the cost curve parameter optimization problem can be formulated as follows:
\begin{subequations} \label{eq:prob}
\begin{align}
    \min_{\vect{y}, z} \quad & f(\vect{y}, z) \\
    \text{s.t.} \quad & (\vect{p}^0)^\top \vect{y} + z \ge c_b (\vect{p}^g)^\top\mathbbold{1} - c_s (\vect{p}^\ell)^\top\mathbbold{1} \nonumber\\
    & \quad \forall (\vect{p}^0, \vect{p}^g, \vect{p}^\ell) \in \mathcal{D}^+, \label{eq:prob:b}
\end{align}
\end{subequations}
where 
\begin{multline}
    \mathcal{D}^+ = \big\{ (\vect{p}^0, \vect{p}^g, \vect{p}^\ell):\; \vect{p}^0 = \vect{D} \left( \vect{p}^g - \vect{p}^\ell \right) + \vect{b}, \\ 
    \vect{W}\left( \vect{p}^g - \vect{p}^\ell \right) \le \vect{u}, 
    \vect{p}^g, \vect{p}^\ell \ge \mathbbold{0}, \vect{p}^g \perp \vect{p}^\ell \big\}.
\end{multline}

The objective of the optimization problem is to minimize the volume under the linear cost curve over the box constraints of $p^0_t$ for all $t = 1, \ldots, T$, where the bounds on $p^0_t$ can be obtained by solving the optimization problems $\min/\max_{(\vect{p}^0, \vect{p}^g, \vect{p}^\ell) \in \mathcal{D}^+} p^0_t$. With the bounds on $\vect{p}^0$, the volume is given by the following integral:
\begin{align} \label{eq:linobj}
    & \phantom{=} \int\displaylimits_{\ubar{p}^0_T}^{\bar{p}^0_T} \cdots \int\displaylimits_{\ubar{p}^0_2}^{\bar{p}^0_2} \int\displaylimits_{\ubar{p}^0_1}^{\bar{p}^0_1} \left( \vect{y}^\top \vect{p}^0 + z \right) \mathrm{d}p^0_1\mathrm{d}p^0_2 \ldots \mathrm{d}p^0_T \nonumber\\
    & = \frac{1}{2} \sum_{i \in \mathcal{T}} \Bigg( y_i \left( (\bar{p}_i^0)^2 - (\ubar{p}_i^0)^2 \right) \prod_{\substack{k \in \mathcal{T} \\ k \neq i}} \left( \bar{p}_k^0 - \ubar{p}_k^0 \right) \Bigg) \nonumber\\
    & \hspace{2in} +z \prod_{k \in \mathcal{T}} \left( \bar{p}_k^0 - \ubar{p}_k^0 \right) \nonumber\\
    & = \frac{1}{2} \left( \prod_{k \in \mathcal{T}} \left( \bar{p}_k^0 - \ubar{p}_k^0 \right) \right) \left( \vect{y}^\top \left( \bar{\vect{p}}^0 + \ubar{\vect{p}}^0 \right) + 2z \right)
\end{align}
Since $\frac{1}{2} \prod_{k \in \mathcal{T}} \left( \bar{p}_k - \ubar{p}_k \right)$ is a positive constant, it suffices to minimize 
\begin{align}
    f(\vect{y}, z) := \vect{y}^\top \left( \bar{\vect{p}}^0 + \ubar{\vect{p}}^0 \right) + 2z
\end{align}
in place of \eqref{eq:linobj}.

It is seen that problem \eqref{eq:prob} is a robust linear programming problem in $(\vect{y},z)$ with uncertainty set $\mathcal{D}^+$. The complicating constraint in $\mathcal{D}^+$ is the orthogonality constraint $\vect{p}^g \perp \vect{p}^\ell$, without which the set is polyhedral. Problem \eqref{eq:prob} will therefore admit a computationally tractable reformulation when the constraint is absent \cite{ben2009robust}. Let $\mathcal{D}$ be the set without the orthogonality constraint:
\begin{multline}
    \mathcal{D} = \big\{ (\vect{p}^0, \vect{p}^g, \vect{p}^\ell):\; \vect{p}^0 = \vect{D} \left( \vect{p}^g - \vect{p}^\ell \right) + \vect{b}, \\ 
    \vect{W}\left( \vect{p}^g - \vect{p}^\ell \right) \le \vect{u}, 
    \vect{p}^g, \vect{p}^\ell \ge \mathbbold{0} \big\}.
\end{multline}
Since $\mathcal{D}^+ \subseteq \mathcal{D}$, replacing $\mathcal{D}^+$ by $\mathcal{D}$ in \eqref{eq:prob} results in a restriction of the original problem as:
\begin{subequations} \label{eq:restriction}
\begin{align}
    \min_{\vect{y}, z} \quad & f(\vect{y}, z) \\
    \text{s.t.} \quad & (\vect{p}^0)^\top \vect{y} + z \ge c_b (\vect{p}^g)^\top\mathbbold{1} - c_s (\vect{p}^\ell)^\top\mathbbold{1} \nonumber\\
    & \quad \forall (\vect{p}^0, \vect{p}^g, \vect{p}^\ell) \in \mathcal{D} \label{eq:restriction:b}
\end{align}
\end{subequations}

Fortunately, it is quite easy to see that \eqref{eq:restriction} is in fact an exact reformulation of \eqref{eq:prob}, which we state below:
\begin{pro}
    Any feasible solution to \eqref{eq:prob} is feasible to \eqref{eq:restriction}.  
\end{pro}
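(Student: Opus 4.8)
The plan is to establish the reverse of the inclusion already noted in the text. Since $\mathcal{D}^+ \subseteq \mathcal{D}$, every $(\vect{y}, z)$ feasible to \eqref{eq:restriction} is automatically feasible to \eqref{eq:prob}; the proposition asks for the converse, so it suffices to show that the constraints in \eqref{eq:restriction:b} indexed by points of $\mathcal{D}\setminus\mathcal{D}^+$ are redundant given those indexed by $\mathcal{D}^+$. Concretely, I would fix an arbitrary $(\vect{y}, z)$ feasible to \eqref{eq:prob} and an arbitrary $(\vect{p}^0, \vect{p}^g, \vect{p}^\ell) \in \mathcal{D}$, and exhibit a companion triple in $\mathcal{D}^+$ that has the \emph{same} aggregate profile $\vect{p}^0$ and a right-hand side in \eqref{eq:prob:b} no smaller than that of the original triple. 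Evaluating \eqref{eq:prob:b} at the companion triple then forces \eqref{eq:restriction:b} to hold at the original triple.

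The companion triple is obtained by the coordinate-wise ``net-out'': for each component $j$ set $\tilde{p}^g_j := \max\{p^g_j - p^\ell_j,\, 0\}$ and $\tilde{p}^\ell_j := \max\{p^\ell_j - p^g_j,\, 0\}$. Then $\tilde{\vect{p}}^g, \tilde{\vect{p}}^\ell \ge \mathbbold{0}$, at most one of $\tilde{p}^g_j, \tilde{p}^\ell_j$ is nonzero for each $j$ (hence $\tilde{\vect{p}}^g \perp \tilde{\vect{p}}^\ell$), and $\tilde{\vect{p}}^g - \tilde{\vect{p}}^\ell = \vect{p}^g - \vect{p}^\ell$. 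Because both the inequality $\vect{W}(\vect{p}^g - \vect{p}^\ell) \le \vect{u}$ and the expression $\vect{D}(\vect{p}^g - \vect{p}^\ell) + \vect{b}$ depend on $(\vect{p}^g, \vect{p}^\ell)$ only through the difference, the triple $(\vect{p}^0, \tilde{\vect{p}}^g, \tilde{\vect{p}}^\ell)$ lies in $\mathcal{D}^+$ and carries the same $\vect{p}^0$.

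For the right-hand side I would verify the scalar inequality $c_b \tilde{p}^g_j - c_s \tilde{p}^\ell_j \ge c_b p^g_j - c_s p^\ell_j$ component by component. If $p^g_j \ge p^\ell_j$, the left-hand side equals $c_b(p^g_j - p^\ell_j)$ and the gap between the two sides is $(c_s - c_b)\,p^\ell_j$; if $p^g_j < p^\ell_j$, the left-hand side equals $-c_s(p^\ell_j - p^g_j)$ and the gap is $(c_s - c_b)\,p^g_j$. In either case the gap is nonnegative since $c_s > c_b$ and $\vect{p}^g, \vect{p}^\ell \ge \mathbbold{0}$; summing over $j$ gives $c_b (\tilde{\vect{p}}^g)^\top\mathbbold{1} - c_s (\tilde{\vect{p}}^\ell)^\top\mathbbold{1} \ge c_b (\vect{p}^g)^\top\mathbbold{1} - c_s (\vect{p}^\ell)^\top\mathbbold{1}$. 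Chaining this with \eqref{eq:prob:b} evaluated at $(\vect{p}^0, \tilde{\vect{p}}^g, \tilde{\vect{p}}^\ell) \in \mathcal{D}^+$, namely $(\vect{p}^0)^\top\vect{y} + z \ge c_b (\tilde{\vect{p}}^g)^\top\mathbbold{1} - c_s (\tilde{\vect{p}}^\ell)^\top\mathbbold{1}$, completes the argument.

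I do not anticipate a real obstacle; the only delicate point is the monotonicity bookkeeping in the last paragraph, which is precisely where the assumption $c_s > c_b$ is used: ``netting out'' simultaneous generation and consumption at a node can never lower the aggregator's worst-case payment, so restricting the uncertainty set to orthogonal disaggregations loses nothing. It is worth remarking, finally, that the companion construction leaves $f(\vect{y}, z)$ untouched, so \eqref{eq:prob} and \eqref{eq:restriction} share the same optimal value as well --- though the proposition as stated only requires the correspondence of feasible sets.
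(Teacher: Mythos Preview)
Your argument is correct and mirrors the paper's proof: both construct the orthogonal companion by subtracting $\min\{p_j^g, p_j^\ell\}$ coordinate-wise (your $\max$ formulas are exactly this), observe that the difference $\vect{p}^g - \vect{p}^\ell$ and hence $\vect{p}^0$ is preserved, and use $c_s > c_b$ to conclude the payment can only increase. Your write-up is somewhat more explicit than the paper's (which omits the verification that $(\vect{p}^0, \tilde{\vect{p}}^g, \tilde{\vect{p}}^\ell) \in \mathcal{D}^+$ and the component-wise gap calculation), but the idea is identical.
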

\begin{proof}
    Given $\vect{p}^g$ and $\vect{p}^\ell$ that are not perpendicular, there exists index $i$ such that $p_i^g, p_i^\ell > 0$. For every such $i$, reducing both $p_i^g$ and $p_i^\ell$ by $\min \{p_i^g, p_i^\ell\}$ so that one of them becomes zero. The new $\tilde{\vect{p}}^g$ and $\tilde{\vect{p}}^\ell$ are perpendicular. Since $c_s > c_b > 0$, $c_b (\tilde{\vect{p}}^g)^\top\mathbbold{1} - c_s (\tilde{\vect{p}}^\ell)^\top\mathbbold{1} \ge c_b (\vect{p}^g)^\top\mathbbold{1} - c_s (\vect{p}^\ell)^\top\mathbbold{1}$.
\end{proof}

The robust optimization problem can be rewritten as
\begin{subequations} \label{eq:twostage}
\begin{align}
    & \min_{\vect{y}, z} \; && f(\vect{y}, z) \\
    & \;\;\, \text{s.t.} \; && \min_{(\vect{p}^0, \vect{p}^g, \vect{p}^\ell) \in \mathcal{D}} (\vect{p}^0)^\top \vect{y} + z - c_b (\vect{p}^g)^\top\mathbbold{1} \nonumber\\
    & && \hspace{1.57in} + c_s (\vect{p}^\ell)^\top\mathbbold{1} \ge 0 \label{eq:twostage:b}
\end{align}
\end{subequations}
Suppose strong duality holds, the minimization problem in \eqref{eq:twostage:b} can be replaced by its dual problem, which is
\begin{subequations} \label{eq:dual}
\begin{align}
    \max_{\vect{\lambda}, \vect{\mu}} \quad & z - \vect{\lambda}^\top \vect{b} -\vect{\mu}^\top \vect{u} \\
    \text{s.t.} \quad & \vect{\lambda} - \vect{y} = 0 \\
    & \begin{bmatrix} \vect{D}^\top \\ -\vect{D}^\top \end{bmatrix} \vect{\lambda} + \begin{bmatrix} \vect{W}^\top \\ -\vect{W}^\top \end{bmatrix} \vect{\mu} \ge \begin{bmatrix} c_b\mathbbold{1} \\ -c_s\mathbbold{1} \end{bmatrix}
\end{align}
\end{subequations}
where $\vect{\lambda} \in \mathbb{R}^T$ and $\vect{\mu} \in \mathbb{R}^{mT}$ are dual variables corresponding to the constraints in $\mathcal{D}$. By replacing the minimization problem in \eqref{eq:twostage:b} with \eqref{eq:dual} and dropping the $\max$ sign, the cost curve parameter optimization problem \eqref{eq:prob} can be reformulated as the following linear program:
\begin{subequations} \label{eq:linear}
\begin{align}
    \min_{\vect{y}, z, \vect{\lambda}, \vect{\mu}} \quad & f(\vect{y}, z) \\
    \text{s.t.} \quad & z - \vect{\lambda}^\top \vect{b} -\vect{\mu}^\top \vect{u} \ge 0 \\
    & \vect{\lambda} - \vect{y} = 0 \\
    & \begin{bmatrix} \vect{D}^\top \\ -\vect{D}^\top \end{bmatrix} \vect{\lambda} + \begin{bmatrix} \vect{W}^\top \\ -\vect{W}^\top \end{bmatrix} \vect{\mu} \ge \begin{bmatrix} c_b\mathbbold{1} \\ -c_s\mathbbold{1} \end{bmatrix}.
\end{align}
\end{subequations}

\subsection{Quadratic Generation Cost Curve}


To improve the approximation quality of the linear cost curve, a quadratic one can be developed. Empirical observation on the maximum customer payment curves suggests the payment curves are likely concave. This makes intuitive sense: the net power consumption of DERs increases as the substation power $p^0_t$ increases, which leads to a more negative marginal cost (DERs' marginal payment increases) since the power selling price is higher than the buying price. We therefore restrict the quadratic cost curve to be concave. As we will see later in the section, this restriction also facilitates efficient solution algorithm.

We minimize the volume under the quadratic cost curve over the box constraints of $p^0_t$ for all $t \in \mathcal{T}$, which is given by the following integral:
\begin{align} \label{eq:quadobj}
    & \phantom{=} \int\displaylimits_{\ubar{p}^0_T}^{\bar{p}^0_T} \cdots \int\displaylimits_{\ubar{p}^0_2}^{\bar{p}^0_2} \int\displaylimits_{\ubar{p}^0_1}^{\bar{p}^0_1} \left( (\vect{p}^0)^\top \vect{Q} \vect{p}^0 + \vect{y}^\top \vect{p}^0 + z \right) \mathrm{d}p^0_1\mathrm{d}p^0_2 \ldots \mathrm{d}p^0_T \nonumber\\
    & = \frac{1}{3} \sum_{i \in \mathcal{T}} \Bigg( Q_{ii}\left( (\bar{p}_i^0)^3 - (\ubar{p}_i^0)^3 \right) \prod_{\substack{k \in \mathcal{T} \\ k \neq i}} \left( \bar{p}_k^0 - \ubar{p}_k^0 \right) \Bigg) \nonumber\\
    & \phantom{=} + \frac{1}{2} \sum_{\substack{i,j \in \mathcal{T} \\ i < j}} \Bigg( Q_{ij} \left( (\bar{p}_i^0)^2 - (\ubar{p}_i^0)^2 \right) \left( (\bar{p}_j^0)^2 - (\ubar{p}_j^0)^2 \right) \nonumber\\
    & \phantom{=} \prod_{\substack{k \in \mathcal{T} \\ k \neq i, j}} \left( \bar{p}_k^0 - \ubar{p}_k^0 \right) \Bigg) + z \prod_{k \in \mathcal{T}} \left( \bar{p}_k^0 - \ubar{p}_k^0 \right) \nonumber\\
    & \phantom{=} + \frac{1}{2} \sum_{i \in \mathcal{T}} \Bigg( y_i \left( (\bar{p}_i^0)^2 - (\ubar{p}_i^0)^2 \right) \prod_{\substack{k \in \mathcal{T} \\ k \neq i}} \left( \bar{p}_k^0 - \ubar{p}_k^0 \right) \Bigg) \nonumber\\
    & = \frac{1}{6} \left( \prod_{k \in \mathcal{T}} \left( \bar{p}_k^0 - \ubar{p}_k^0 \right) \right) \Bigg( 2\sum_{i \in \mathcal{T}} Q_{ii} \left( (\bar{p}_i^0)^2 + \bar{p}_i^0\ubar{p}_i^0 + (\ubar{p}_i^0)^2 \right) \nonumber\\
    & + 3\sum_{\substack{i,j \in \mathcal{T} \\ i < j}} Q_{ij} \left( \bar{p}_i^0 + \ubar{p}_i^0 \right) \left( \bar{p}_j^0 + \ubar{p}_j^0 \right) + 6z + 3 \vect{y}^\top \left( \bar{\vect{p}}^0 + \ubar{\vect{p}}^0 \right) \Bigg)
\end{align}
Since $\frac{1}{6} \prod_{k \in \mathcal{T}} \left( \bar{p}_k - \ubar{p}_k \right)$ is a positive constant, it suffices to minimize 
\begin{multline}
    g(\vect{Q}, \vect{y}, z) := 3\sum_{\substack{i,j \in \mathcal{T} \\ i < j}} Q_{ij} \left( \bar{p}_i^0 + \ubar{p}_i^0 \right) \left( \bar{p}_j^0 + \ubar{p}_j^0 \right) \\
    + 2\sum_{i \in \mathcal{T}} Q_{ii} \left( (\bar{p}_i^0)^2 + \bar{p}_i^0\ubar{p}_i^0 + (\ubar{p}_i^0)^2 \right) + 3 \vect{y}^\top \left( \bar{\vect{p}}^0 + \ubar{\vect{p}}^0 \right) + 6z
\end{multline}
in place of \eqref{eq:quadobj}.

With the objective function $g(\vect{Q}, \vect{y}, z)$, the quadratic cost curve design problem can be formulated as the following robust optimization problem:
\begin{subequations} \label{eq:quadorig}
\begin{align} 
    \min_{\vect{Q} \preceq 0, \vect{y}, z} \quad & g(\vect{Q}, \vect{y}, z) \label{eq:quadorig:a} \\
    \text{s.t.} \quad & (\vect{p}^0)^\top \vect{Q} \vect{p}^0 + \vect{y}^\top \vect{p}^0 + z \ge c_b (\vect{p}^g)^\top\mathbbold{1} - c_s (\vect{p}^\ell)^\top\mathbbold{1} \nonumber\\
    & \quad \forall (\vect{p}^0, \vect{p}^g, \vect{p}^\ell) \in \mathcal{D} \label{eq:quadorig:b}
\end{align}
\end{subequations}
It is tempting to repeat the approach used in the linear case to reformulate the robust constraint \eqref{eq:quadorig:b} through Lagrangian duality. However, strong duality no longer holds since $\vect{Q}$ is negative semidefinite. To see this, note the robust constraint \eqref{eq:quadorig:b} can be reformulated as follows:
\begin{multline} \label{eq:quadmin}
    \min_{(\vect{p}^0, \vect{p}^g, \vect{p}^\ell) \in \mathcal{D}} \big\{ (\vect{p}^0)^\top \vect{Q} \vect{p}^0 + \vect{y}^\top \vect{p}^0 + z \\
    - c_b (\vect{p}^g)^\top\mathbbold{1} + c_s (\vect{p}^\ell)^\top\mathbbold{1} \big\} \ge 0
\end{multline}
Since $\vect{Q} \preceq 0$, the quadratic program (QP) in \eqref{eq:quadmin} is concave. However, it is well known that the minimizer of a concave QP is attained at an extreme point of its feasible set \cite{floudas1995handbook}. If we denote the set of extreme points of $\mathcal{D}$ by $\mathcal{I}$, \eqref{eq:quadmin} admits the following equivalent reformulation:
\begin{multline}
    (\vect{p}^0_*)^\top \vect{Q} \vect{p}^0_* + \vect{y}^\top \vect{p}^0_* + z - c_b (\vect{p}^g_*)^\top\mathbbold{1} + c_s (\vect{p}^\ell_*)^\top\mathbbold{1} \ge 0, \\
    \forall (\vect{p}^0_*, \vect{p}^g_*, \vect{p}^\ell_*) \in \mathcal{I}.
\end{multline}
Since $\mathcal{I}$ is a finite set, the semi-infinite problem \eqref{eq:quadorig} admits a reformulation with a finite number of constraints, albeit potentially exponential in the dimension of $\mathcal{D}$:
\begin{subequations} \label{eq:quadreform}
\begin{align} 
    \min_{\vect{Q} \preceq 0, \vect{y}, z} \quad & g(\vect{Q}, \vect{y}, z) \\
    \text{s.t.} \quad & (\vect{p}^0_*)^\top \vect{Q} \vect{p}^0_* + \vect{y}^\top \vect{p}^0_* + z - c_b (\vect{p}^g_*)^\top\mathbbold{1} \nonumber\\
    & \quad+ c_s (\vect{p}^\ell_*)^\top\mathbbold{1} \ge 0, \quad \forall (\vect{p}^0_*, \vect{p}^g_*, \vect{p}^\ell_*) \in \mathcal{I} \label{eq:quadreform:b}
\end{align}
\end{subequations}
The problem structure of \eqref{eq:quadreform} suggests a natural solution approach based on constraint generation. We describe the algorithm below:

Initialization: Set the number of iteration $k=1$. Set convergence tolerance $\epsilon < 0$. Let $(\vect{p}_l^0, \vect{p}^l_l, \vect{p}_l^\ell), l \le L$ be the extreme points of $\mathcal{D}$ and let $\mathcal{I}_1 = \{ (\vect{p}_l^0, \vect{p}^g_l, \vect{p}_l^\ell), l \le L \}$.

Iteration $k > 1$:

Step 1) Solve the master problem.
The master problem is the following semidefinite program:
\begin{subequations} \label{eq:master}
\begin{align} 
    \min_{\vect{Q}, \vect{y}, z} \quad & g(\vect{Q}, \vect{y}, z) \label{eq:master:a} \\
    \text{s.t.} \quad & \Tr\left( \left(\vect{p}^0_*(\vect{p}^0_*)^\top\right) \cdot \vect{Q}\right) + \left( \vect{p}^0_* \right)^\top \vect{y} + z - c_b (\vect{p}^g_*)^\top\mathbbold{1} \nonumber\\
    & \quad+ c_s (\vect{p}^\ell_*)^\top\mathbbold{1} \ge 0, \quad \forall (\vect{p}^0_*, \vect{p}^g_*, \vect{p}^\ell_*) \in \mathcal{I}_k \label{eq:master:b} \\
    & \vect{Q} + \tilde{\vect{Q}} = 0, \label{eq:master:c} \\
    & \tilde{\vect{Q}} \succeq 0. \label{eq:master:d}
\end{align}
\end{subequations}
Let $(\vect{Q}_k, \vect{y}_k, z_k)$ be the optimal solution. 

Step 2) Solve the subproblem. The subproblem is the following concave QP:
\begin{subequations} \label{eq:sub}
\begin{align} 
    \min_{\vect{p}^0, \vect{p}^g, \vect{p}^{\ell}} \quad & (\vect{p}^0)^\top \vect{Q}_k \vect{p}^0 + \vect{y}_k^\top \vect{p}^0 + z_k \nonumber\\
    & \hspace{0.8in} - c_b (\vect{p}^g)^\top\mathbbold{1} + c_s (\vect{p}^\ell)^\top\mathbbold{1} \label{eq:sub:a} \\
    \text{s.t.} \quad & \vect{p}^0 = \vect{D}(\vect{p}^g - \vect{p}^\ell) + \vect{b}, \label{eq:sub:b} \\
    & \vect{W}(\vect{p}^g - \vect{p}^\ell) \le \vect{u}, \label{eq:sub:c} \\
    & \vect{p}^g, \vect{p}^\ell \ge 0. \label{eq:sub:d}
\end{align}
\end{subequations}
Several algorithms exist to solve concave QP to global optimality. Commercial solvers such as Gurobi and CPLEX are also capable of finding the global optimum. Assume such solver is at our disposal and let the globally optimal solution be $(\vect{p}_k^0, \vect{p}^g_k, \vect{p}_k^\ell)$. Let the optimal cost be $S_k$. 

Step 3) Check the convergence. If $S_k \ge \epsilon$, stop and return $(\vect{Q}_k, \vect{y}_k, z)$. Otherwise, set $\mathcal{I}_{k+1} = \mathcal{I}_k \cup \{ (\vect{p}_k^0, \vect{p}^g_k, \vect{p}_k^\ell) \}$, let $k = k+1$, and return to step 1.

\subsection{Derivation of Hourly Generation Cost Curve }

The previous two subsections discuss the design of multi-time-step generation cost curves. Given the substation power $\vect{p}^0$ across time steps $t = 1, \ldots, T$, these curves provide upper bounds on the payments to customers that realize the specific $\vect{p}^0$ considering the DER and network feasibility constraints over $t = 1, \ldots, T$. The multi-time-step cost curves are convenient when considering the operations of VPPs across multiple time steps. However, it may be more convenient to use the hourly generation cost curves for certain scenarios when short-term dispatch commands are sought after. Solving the optimization problems in the previous subsections for $T = 1$ yield hourly cost curves. However, these curves may be overly conservative as the optimization problem may require constraint satisfaction for some infeasible $(p^0_t, \vect{p}^g_t, \vect{p}^\ell_t)$ instances, since no inter-temporal constraints which further constrains the DER output are considered. To incorporate the inter-temporal constraints in the design of the hourly cost curve, the optimization problems need slight modification. Take the quadratic case as an example (the linear case is identical). Instead of solving \eqref{eq:quadorig} for $T = 1$, we solve:
\begin{subequations} \label{eq:quadsingle}
\begin{align} 
    \min_{Q < 0, y, z} \quad & g(Q, y, z) \label{eq:quadsingle:a} \\
    \text{s.t.} \quad & Q(p^0_t)^2 + yp^0_t + z \ge c_b (\vect{p}^g_t)^\top\mathbbold{1} - c_s (\vect{p}^\ell_t)^\top\mathbbold{1}, \nonumber\\
    & \quad \forall \left( \begin{bsmallmatrix}p^0_1 \\ \vdots \\ p^0_t \\ \vdots \\ p^0_T \end{bsmallmatrix}, \begin{bsmallmatrix} \vect{p}^g_1 \\ \vdots \\ \vect{p}^g_t \\ \vdots \\ \vect{p}^g_T \end{bsmallmatrix}, \begin{bsmallmatrix} \vect{p}^\ell_1 \\ \vdots \\ \vect{p}^\ell_t \\ \vdots \\ \vect{p}^\ell_T \end{bsmallmatrix} \right) \in \mathcal{D}, \label{eq:quadsingle:b}
\end{align}
\end{subequations}
That is, we keep the objective and the inequality \eqref{eq:quadorig:b} intact, while requiring that the inequality \eqref{eq:quadsingle:b} only holds for those $(p^0_t, \vect{p}_t^g, \vect{p}_t^\ell)$ that are parts of feasible time series $(\vect{p}^0, \vect{p}^g, \vect{p}^\ell)$ that spans over some time horizon $t = 1, \ldots, T$, so that the inter-temporal DER constraints are incorporated.

The other way to derive the hourly cost curve considering inter-temporal constraints given the multi-time-step cost curve is to project the cost curve onto a specific hour. However, finding this projection is in general computationally hard, and the projection does not necessarily preserve the linear or quadratic form of the original cost curve.

\section{Numerical Simulation Results} \label{sect:simulation}

The proposed formulations are tested using data of a real
distribution feeder located in the territory of Southern California Edison. The distribution feeder has 126 multiphase nodes with a total of 366 single-phase points of connection. The nominal voltage at the substation is 12 kV, and voltage limits are set to 1.05 p.u. and 0.95 p.u for all nodes. There are 55 uncontrollable loads scattered across the feeder. Dispatchable DERs include 33 PV units, 28 energy storage devices, and 5 HVAC systems. Detailed configurations and parameters of the distribution feeder can be found in \cite{bernstein2019real}. The device model parameters follow \cite{chen2019aggregate}, except that we scale the capacities of the DERs to ensure the feeder is not only a power consumer but is also capable of operating as a VPP that provides power support to the upstream system. Specifically, the PV units are scaled up 10 times, and the capacities of the energy storage units and the HVAC systems are scaled up 5 times. We assume for the aggregator, the per unit power buying price for each DER is $c_b = 1$ and the per unit power selling price for each DER is $c_s = 2$. For simplicity the price of the constant load powers are not taken into account. MATLAB 2020b is used for all computational experiments. CVX \cite{cvx} are used to formulate convex optimization problems and all optimization problems are solved with Gurobi 9.1\cite{gurobi}.

\subsection{Multi-Time-Step Linear and Quadratic Cost Curves}

In the case study, we implement both the linear and quadratic cost curve generation formulations described in Section \ref{sect:formulation}. We set the time horizon to be from 9:00 to 13:00 with 1-hour granularity, so the number of time steps is four. The quadratic cost curve is expected to dominate the linear one in terms of the optimal objective value since for an optimal solution $(\vect{y}^*, z^*)$ of the linear case, the solution $(\mathbbold{0}, \vect{y}^*, z^*)$ is always a feasible solution for the quadratic case.

We obtain the quadratic and linear cost curves by implementing the respective solution algorithms detailed in Section \ref{sect:formulation}. The two cost curves are given by
\begin{subequations}
\begin{align}
    & \textbf{Quadratic:} \nonumber\\
    & q(\vect{p}^0) = (\vect{p}^0)^\top 10^{-6}
    \begin{bmatrix}
    0.52 & 0.05 & 0.06 & 0.06 \\
    0.05 & 0.07 & 0.01 & -0.00 \\
    0.06 & 0.01 & 0.04 & 0.01 \\
    0.06 & -0.00 & 0.01 & 0.04
    \end{bmatrix} \vect{p}^0 \nonumber\\
    & - 10^3\begin{bmatrix}
    6.85 & 6.85 & 6.85 & 6.85
    \end{bmatrix} \vect{p}^0 + 1.62\times10^5, \label{eq:quadcost} \\
    & \textbf{Linear:} \nonumber\\
    & l(\vect{p}^0) = - 10^3\begin{bmatrix}
    6.85 & 6.85 & 6.85 & 6.85
    \end{bmatrix} \vect{p}^0 + 1.62\times10^5. \label{eq:lincost}
\end{align}
\end{subequations}
The optimal costs $f^*$ and $g^*$ for problems \eqref{eq:linear} and \eqref{eq:quadorig} are exactly the same after scaling, with $f^* = g^*/3= 5.89 \times 10^5$.

By looking at the coefficients of both cost curves, we see that both the linear coefficients and the constant terms are exactly the same between the two curves. In addition, the elements of the quadratic matrix in \eqref{eq:quadcost} are negligible. Under the premise of a moderate loading conditions and a normal operation condition under which the linearized power flow approximation is relatively accurate, the similarity between the two curves is a reflection of the relative `flatness' of the maximum payment curve. However, we do expect this to change as the nonlinearity of the payment curve becomes more pronounced due to either more stressed loading conditions, more stringent DER/operating constraints, or shorter time periods.

\subsection{Convergence of the Constraint Generation Algorithm}

As discussed in Section \ref{sect:formulation}, the quadratic cost curve design problem is solved in a constraint generation algorithm. For the numerical problem considered above, the algorithm converges in 35 iterations, where the convergence tolerance is set to be $\epsilon = 1 \times 10^{-6}$. Fig. \ref{fig:convergence} shows the optimal cost of the subproblem until convergence. The algorithm converges quite fast on the problem with four time steps, with each iteration taking less than 1 second. 

\begin{figure}[!t]
    \centering
    \includegraphics[width=0.9\linewidth]{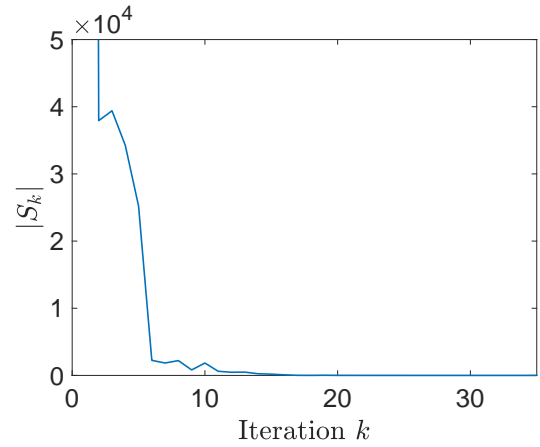}
	\caption{Convergence of the constraint generation algorithm for quadratic cost curve optimization.} 
	\label{fig:convergence}
\end{figure}

\subsection{Hourly Linear and Quadratic Generation Cost Curve} \label{sect:hourly}

We next examine the design of hourly cost curves with the time horizon of 4 hours as considered before. Fig. \ref{fig:hourly} shows the four linear and quadratic hourly cost curves. The curve of the maximum payment to customers is calculated numerically by maximizing $c_b (\vect{p}^g_t)^\top\mathbbold{1} - c_s (\vect{p}^\ell_t)^\top\mathbbold{1}$ for a fixed $p^0_t$. The numerical maximum payment curve do suggest that it is concave and nonlinear, for which the quadratic cost curve approximations achieve more accurate approximation result at the cost of introducing some nonconvexity. It is worth noting that since the maximum payment curves are very close to being piecewise linear for all four hours, it would be an interesting research direction to consider deriving piecewise linear cost curves, potentially through robust mixed-integer programming formulations.

\begin{figure}[!t]
    \centering
    \captionsetup[subfigure]{justification=centering}
     \begin{subfigure}{0.24\textwidth}
        \includegraphics[width=1.05\linewidth]{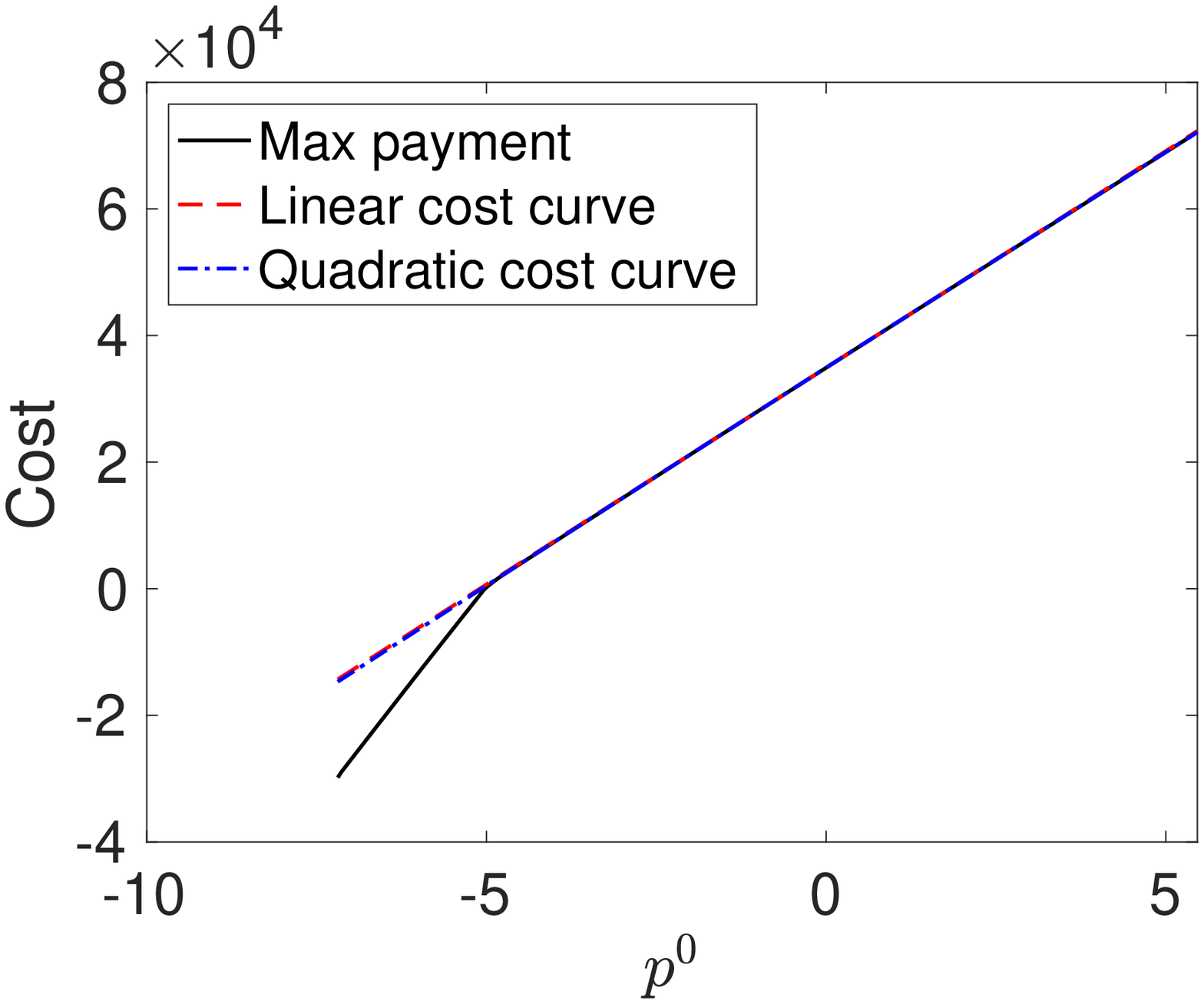}
	    \caption{9:00--10:00.} 
	    \label{fig:proj:1}
     \end{subfigure}
     \begin{subfigure}{0.24\textwidth}
	    \includegraphics[width=1.05\linewidth]{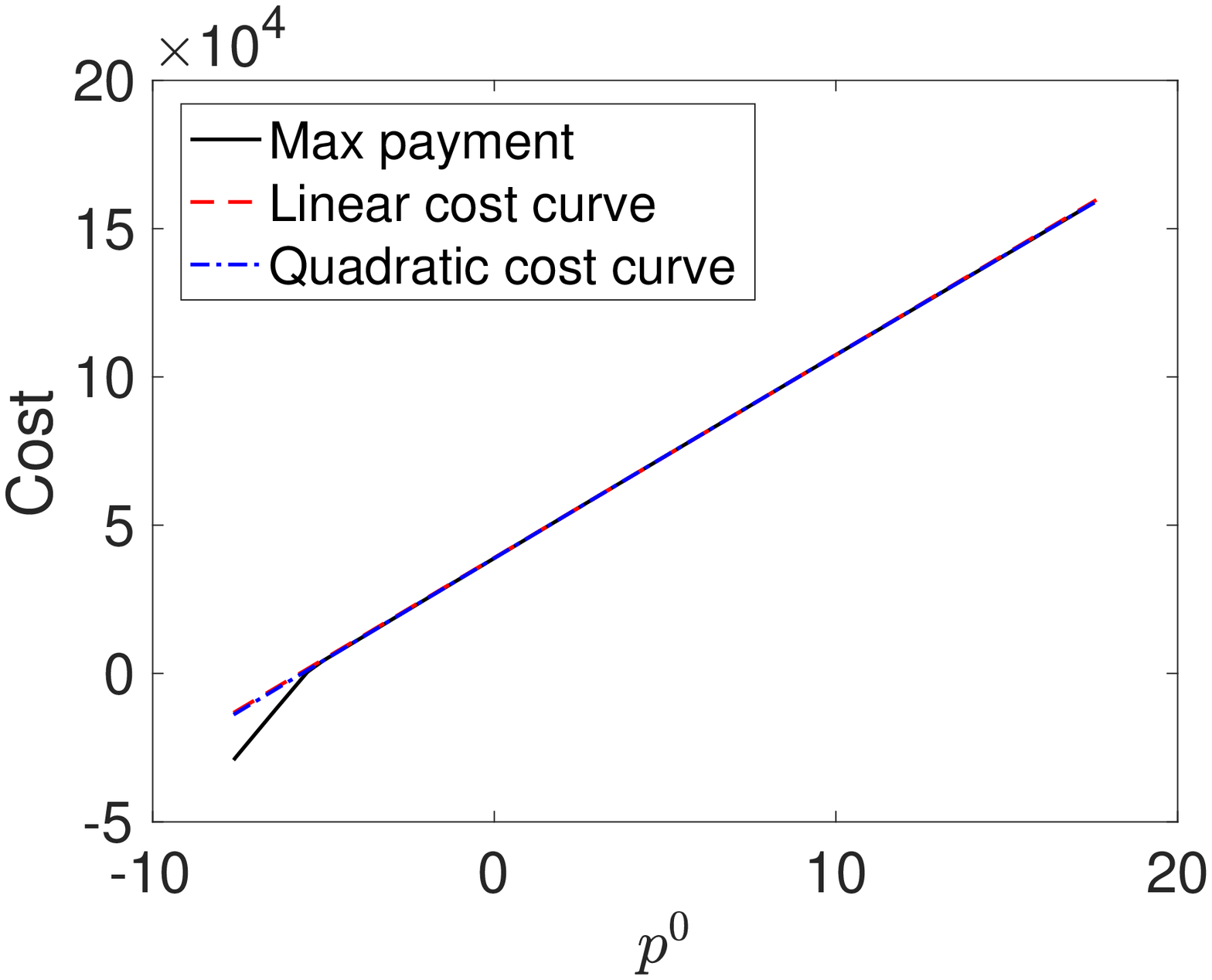}	
	    \caption{10:00--11:00.} 
	    \label{fig:proj:2}
     \end{subfigure}
     \begin{subfigure}{0.24\textwidth}
        \centering
	    \includegraphics[width=1.05\linewidth]{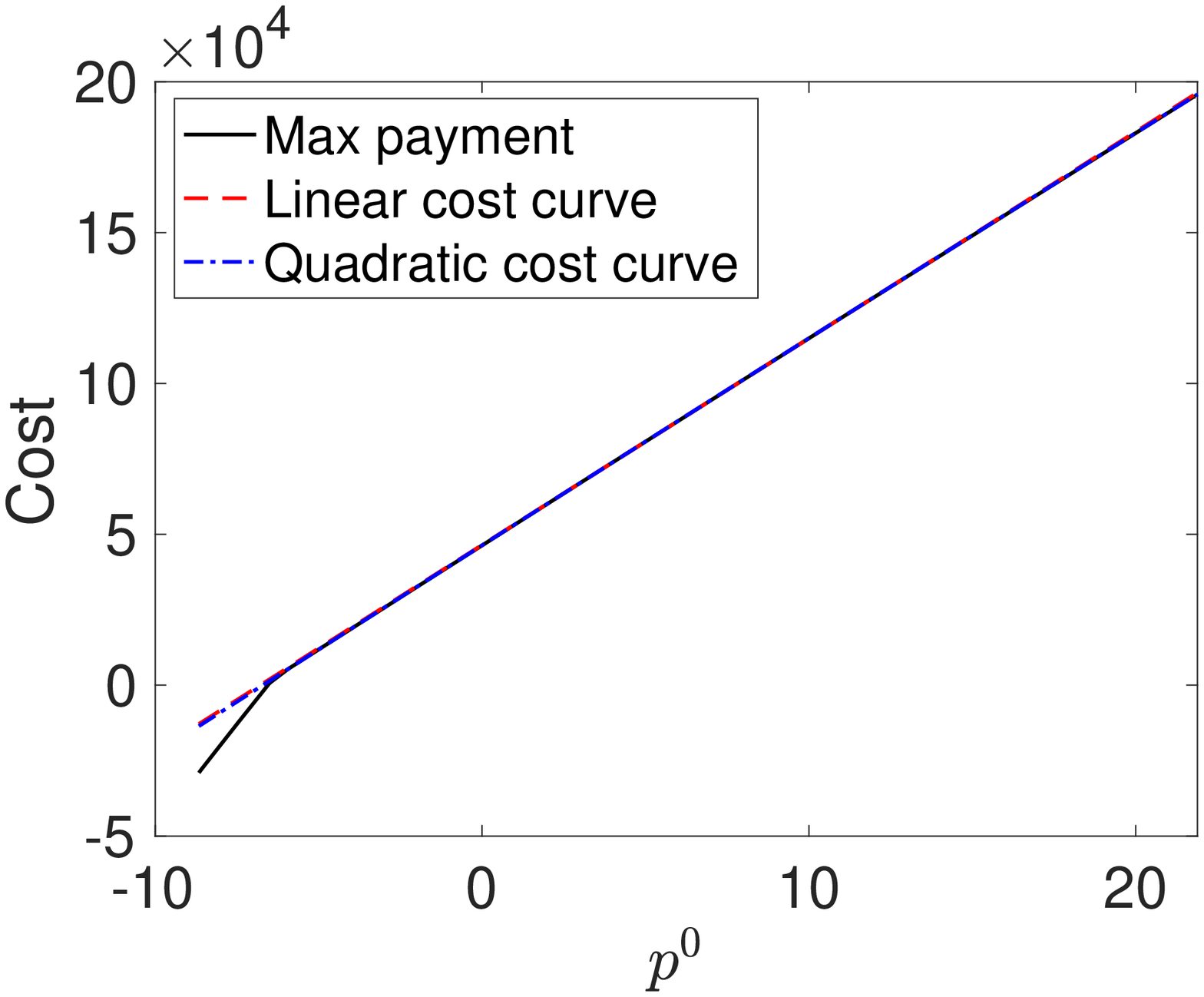}	
	    \caption{11:00--12:00.} 
	    \label{fig:proj:3}
     \end{subfigure}
     \begin{subfigure}{0.24\textwidth}
        \centering
	    \includegraphics[width=1.05\linewidth]{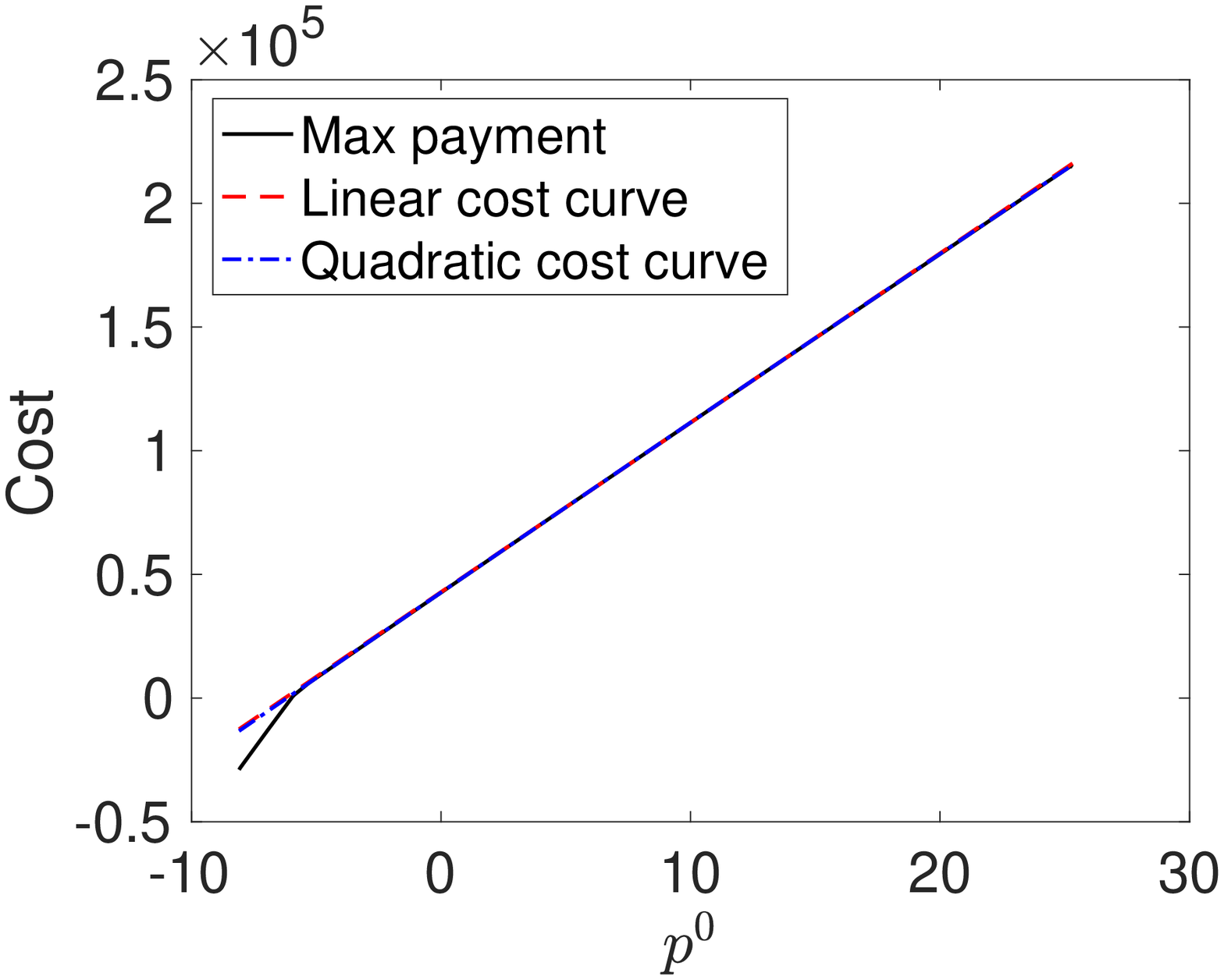}	
	    \caption{12:00--13:00.} 
	    \label{fig:proj:4}
     \end{subfigure}
     \caption{Comparison of linear and quadratic cost curves as upper bounds of the curve of the maximum payment to customers.}\vspace{-12pt}
     \label{fig:hourly}
\end{figure}

\section{Conclusions} \label{sect:conclusion}
This paper proposed efficient optimization formulations and solution approaches for the characterization of multi-time-step as well as hourly generation cost curves for a distribution system with high penetration of DERs. Network and DER constraints are taken into account when deriving these cost curves, and they enable active distribution systems to bid into the electricity market. The problems of deriving linear and quadratic cost curves are formulated as robust optimization problems and tractable reformulation/solution algorithm are developed to facilitate efficient calculations. Simulation results validated the effectiveness of the proposed formulations and the solution algorithm, and demonstrated the quality of the derived generation cost curves as tight upper bounds to the maximum payment to the customers. Some future research directions include alternative cost curves (piecewise linear), nonlinear network models, and the incorporation of network and parameter uncertainties.



%



\bibliographystyle{IEEEtran}
\bibliography{irep2022_market.bib}

\end{document}